\newtheorem{theorem}{Theorem}
\newtheorem{proposition}{Proposition}
\newcommand{\mathsym}[1]{{}}
\newcommand{\unicode}[1]{{}}
\begin{document}

\title{On the Super-Additivity and Estimation Biases of Quantile
Contributions}
\author{ 
\IEEEauthorblockN{Nassim Nicholas Taleb\IEEEauthorrefmark{1},
Raphael Douady\IEEEauthorrefmark{2}} \\
\IEEEauthorblockA{\IEEEauthorrefmark{1}School of Engineering, New York University \\
     \and } 
\IEEEauthorblockA{\IEEEauthorrefmark{2}Riskdata \& C.N.R.S. Paris, Labex ReFi, Centre d'Economie de la Sorbonne
     \and } }
\maketitle

\begin{abstract}
Sample measures of top centile contributions to the total (concentration)
are downward biased, unstable estimators, extremely sensitive to sample size
and concave in accounting for large deviations. It makes them particularly
unfit in domains with power law tails, especially for low values of the
exponent. These estimators can vary over time and increase with the
population size, as shown in this article, thus providing the illusion of
structural changes in concentration. They are also inconsistent under
aggregation and mixing distributions, as the weighted average of
concentration measures for $A$ and $B$ will tend to be lower than that from $%
A\cup B$. In addition, it can be shown that under such fat tails, increases
in the total sum need to be accompanied by increased sample size of the
concentration measurement. We examine the estimation superadditivity and
bias under homogeneous and mixed distributions. 
\end{abstract}

\thanks{Fourth version, Nov 11 2014}
\thispagestyle{fancy} 
\markboth{\textbf{Extreme Risk Initiative ---NYU School of Engineering Working Paper Series}}
\flushbottom

\section{Introduction}

Vilfredo Pareto noticed that 80\% of the land in Italy belonged to 20\% of
the population, and vice-versa, thus both giving birth to the power law
class of distributions and the popular saying 80/20. The self-similarity at
the core of the property of power laws \cite{mandelbrot1960pareto} and \cite%
{mandelbrot1963stable} allows us to recurse and reapply the 80/20 to the
remaining 20\%, and so forth until one obtains the result that the top
percent of the population will own about 53\% of the total wealth.

It looks like such a measure of concentration can be seriously biased,
depending on how it is measured, so it is very likely that the true ratio of
concentration of what Pareto observed, that is, the share of the top
percentile, was closer to 70\%, hence changes year-on-year would drift
higher to converge to such a level from larger sample. In fact, as we will
show in this discussion, for, say wealth, more complete samples resulting
from technological progress, and also larger population and economic growth
will make such a measure converge by increasing over time, for no other
reason than expansion in sample space or aggregate value.

The core of the problem is that, for the class one-tailed fat-tailed random
variables, that is, bounded on the left and unbounded on the right, where
the random variable $X\in \lbrack x_{\min },\infty )$, the in-sample
quantile contribution is a biased estimator of the true value of the actual
quantile contribution.

Let us define the \emph{quantile contribution}%
\begin{equation*}
\kappa _{q}=q\frac{\mathbb{E}[X|X>h(q)]}{\mathbb{E}[X]}
\end{equation*}%
where $h(q)=\inf \{h\in \left[ x_{min},+\infty \right) ,\mathbb{P}(X>h)\leq
q\}$ is the exceedance threshold for the probability $q.$

For a given sample $(X_{k})_{1\leq k\leq n}$, its "natural" estimator $%
\widehat{\kappa }_{q}\equiv \frac{q^{th}\text{percentile}}{total}$, used in
most academic studies, can be expressed, as 
\begin{equation*}
\widehat{\kappa }_{q}\equiv \frac{\sum_{i=1}^{n}\mathbbm{1}_{X_{i}>\hat{h}%
(q)}X_{i}}{\sum_{i=1}^{n}X_{i}}
\end{equation*}%
where $\hat{h}(q)$ is the estimated exceedance threshold for the probability 
$q:$%
\begin{equation*}
\hat{h}(q)=\inf \{h:\frac{1}{n}\sum_{i=1}^{n}\mathbbm{1}_{x>h}\leq q\}
\end{equation*}%
We shall see that the observed variable $\widehat{\kappa }_{q}$ is a
downward biased estimator of the true ratio $\kappa _{q}$, the one that
would hold out of sample, and such bias is in proportion to the fatness of
tails and, for very fat tailed distributions, remains significant, even for
very large samples.

\section{Estimation For Unmixed Pareto-Tailed Distributions}

Let $X$ be a random variable belonging to the class of distributions with a
"power law" right tail, that is: 
\begin{equation}
\mathbb{P}(X>x)\sim L(x)\,x^{-\alpha }  \label{powerlaweq}
\end{equation}%
where $L:\left[ x_{\min },+\infty \right) \rightarrow \left( 0,+\infty
\right) $ is a slowly varying function, defined as $\lim_{x\rightarrow
+\infty }\frac{L(kx)}{L(x)}=1$ for any $k>0$.

There is little difference for small exceedance quantiles (<50\%) between
the various possible distributions such as Student's t, L\'{e}vy $\alpha $%
-stable, Dagum,\cite{dagum1980inequality},\cite{dagum1983income}
Singh-Maddala distribution \cite{singh1978function}, or straight Pareto.

For exponents $1\leq \alpha \leq 2$, as observed in \cite{taleb2014silent},
the law of large numbers operates, though \textit{extremely} slowly. The
problem is acute for $\alpha $ around, but strictly above 1 and severe, as
it diverges, for $\alpha =1$.

\subsection{Bias and Convergence}

\subsubsection{Simple Pareto Distribution}

Let us first consider $\phi _{\alpha }(x)$ the density of a $\alpha $-Pareto
distribution bounded from below by $x_{\min }>0,$ in other words: $\phi
_{\alpha }(x)=\alpha x_{\min }^{\alpha }x^{-\alpha -1}\mathbbm{1}_{x\geq
x_{\min }}$, and $\mathbb{P}(X>x)=\left( \frac{x_{\min }}{x}\right)
{}^{\alpha }$. Under these assumptions, the cutpoint of exceedance is $%
h(q)=x_{\min }\,q^{-1/\alpha }$ and we have:%
\begin{equation}
\kappa _{q}=\frac{\int_{h(q)}^{\infty }x\,\phi (x)dx}{\int_{x_{min}}^{\infty
}x\,\phi (x)dx}=\left( \frac{h(q)}{x_{\min }}\right) {}^{1-\alpha }=q^{\frac{%
\alpha -1}{\alpha }}  \label{qequation}
\end{equation}%
If the distribution of $X$ is $\alpha $-Pareto only beyond a cut-point $x_{%
\text{cut}}$, which we assume to be below $h(q)$, so that we have $\mathbb{P}%
(X>x)=\left( \frac{\lambda }{x}\right) {}^{\alpha }$ for some $\lambda >0$,
then we still have $h(q)=\lambda q^{-1/\alpha }$ and%
\begin{equation*}
\kappa _{q}=\frac{\alpha }{\alpha -1}\frac{\lambda }{\mathbb{E}\left[ X%
\right] }q^{\frac{\alpha -1}{\alpha }}
\end{equation*}%
The estimation of $\kappa _{q}$ hence requires that of the exponent $\alpha $
as well as that of the scaling parameter $\lambda $, or at least its ratio
to the expectation of $X$.

Table \ref{biases} shows the bias of $\widehat{\kappa }_{q}$ as an estimator
of $\kappa _{q}$ in the case of an $\alpha $-Pareto distribution for $\alpha
=1.1$, a value chosen to be compatible with practical economic measures,
such as the wealth distribution in the world or in a particular country,
including developped ones.\footnote{%
This value, which is lower than the estimated exponents one can find in the
literature -- around 2 -- is, following \cite{falk1995testing}, a lower
estimate which cannot be excluded from the observations.} In such a case,
the estimator is extemely sensitive to \emph{"small"} samples, \emph{"small"}
meaning in practice $10^{8}$. We ran up to a trillion simulations across
varieties of sample sizes. While $\kappa _{0.01}\approx 0.657933$, even a
sample size of 100 million remains severely biased as seen in the table.

Naturally the bias is rapidly (and nonlinearly) reduced for $\alpha $
further away from 1, and becomes weak in the neighborhood of 2 for a
constant $\alpha $, though not under a mixture distribution for $\alpha $,
as we shall se later. It is also weaker outside the top 1\% centile, hence
this discussion focuses on the famed "one percent" and on low values of the $%
\alpha $ exponent.

\begin{table}[h]
\caption{Biases of Estimator of $\protect\kappa =0.657933$ From $10^{12}$
Monte Carlo Realizations}
\label{biases}%
\begin{tabular}{l|ccc}
$\widehat{\kappa}(n)$ & Mean & Median & STD \\ 
&  &  & across MC runs \\ \hline
$\widehat{\kappa}(10^3)$ & 0.405235 & 0.367698 & 0.160244 \\ 
$\widehat{\kappa}(10^4)$ & 0.485916 & 0.458449 & 0.117917 \\ 
$\widehat{\kappa}(10^5)$ & 0.539028 & 0.516415 & 0.0931362 \\ 
$\widehat{\kappa}(10^6)$ & 0.581384 & 0.555997 & 0.0853593 \\ 
$\widehat{\kappa}(10^7)$ & 0.591506 & 0.575262 & 0.0601528 \\ 
$\widehat{\kappa}(10^8)$ & 0.606513 & 0.593667 & 0.0461397 \\ 
&  &  & 
\end{tabular}%
\end{table}

In view of these results and of a number of tests we have performed around
them, we can conjecture that the bias $\kappa _{q}-\widehat{\kappa }_{q}(n)$
is "of the order of" $\displaystyle c(\alpha ,q)n^{-b(q)(\alpha -1)}$ where
constants $b(q)$ and $c(\alpha ,q)$ need to be evaluated. Simulations
suggest that $b(q)=1,$ whatever the value of $\alpha $ and $q$, but the
rather slow convergence of the estimator and of its standard deviation to 0
makes precise estimation difficult.

\subsubsection{General Case}

In the general case, let us fix the threshold $h$ and define:%
\begin{equation*}
\kappa _{h}=P(X>h)\frac{\mathbb{E}[X|X>h]}{\mathbb{E}[X]}=\frac{\mathbb{E}[X%
\mathbbm{1}_{X>h}]}{\mathbb{E}[X]}
\end{equation*}%
so that we have $\kappa _{q}=\kappa _{h(q)}.$ We also define the $n$-sample
estimator:%
\begin{equation*}
\widehat{\kappa }_{h}\equiv \frac{\sum_{i=1}^{n}\mathbbm{1}_{X_{i}>h}X_{i}}{%
\sum_{i=1}^{n}X_{i}}
\end{equation*}%
where $X_{i}$ are $n$ independent copies of $X$. The intuition behind the
estimation bias of $\kappa _{q}$ by $\widehat{\kappa }_{q}$ lies in a
difference of concavity of the concentration measure with respect to an
innovation (a new sample value), whether it falls below or above the
threshold. Let $A_{h}(n)=\sum_{i=1}^{n}\mathbbm{1}_{X_{i}>h}X_{i}$ and $%
S(n)=\sum_{i=1}^{n}X_{i},$ so that $\displaystyle\widehat{\kappa }_{h}(n)=%
\frac{A_{h}(n)}{S(n)}$ and assume a frozen threshold $h$. If a new sample
value $X_{n+1}<h$ then the new value is $\displaystyle\widehat{\kappa }%
_{h}(n+1)=\frac{A_{h}(n)}{S(n)+X_{n+1}}.$ The value is convex in $X_{n+1}$
so that uncertainty on $X_{n+1}$ increases its expectation. At variance, if
the new sample value $X_{n+1}>h$, the new value $\widehat{\kappa }%
_{h}(n+1)\approx \frac{A_{h}(n)+X_{n+1}-h}{S(n)+X_{n+1}-h}=1-\frac{%
S(n)-A_{h}(n)}{S(n)+X_{n+1}-h},$ which is now concave in $X_{n+1},$ so that
uncertainty on $X_{n+1}$ reduces its value. The competition between these
two opposite effects is in favor of the latter, because of a higher
concavity with respect to the variable, and also of a higher variability
(whatever its measurement) of the variable conditionally to being above the
threshold than to being below. The fatter the right tail of the
distribution, the stronger the effect. Overall, we find that $\displaystyle%
\mathbb{E}\left[ \widehat{\kappa }_{h}(n)\right] \leq \frac{\mathbb{E}\left[
A_{h}(n)\right] }{\mathbb{E}\left[ S(n)\right] }=\kappa _{h} $ (note that
unfreezing the threshold $\hat{h}(q)$ also tends to reduce the concentration
measure estimate, adding to the effect, when introducing one extra sample
because of a slight increase in the expected value of the estimator $\hat{h}%
(q)$, although this effect is rather negligible). We have in fact the
following:

\begin{proposition}
Let $\boldsymbol{X}=(X)_{i=1}^{n}$ a random sample of size $n>\frac{1}{q}$, $%
Y=X_{n+1}$ an extra single random observation, and define: $\displaystyle%
\widehat{\kappa }_{h}(\boldsymbol{X}\sqcup Y)=\frac{\sum_{i=1}^{n}\mathbbm{1}%
_{X_{i}>h}X_{i}+\mathbbm{1}_{Y>h}Y}{\sum_{i=1}^{n}X_{i}+Y}$. We remark that,
whenever $Y>h$, one has: 
\begin{equation*}
\frac{\partial ^{2}\widehat{\kappa }_{h}(\boldsymbol{X}\sqcup Y)}{\partial
Y^{2}}\leq 0.
\end{equation*}%
This inequality is still valid with $\widehat{\kappa }_{q}$ as the value $%
\hat{h}(q,\boldsymbol{X}\sqcup Y)$ doesn't depend on the particular value of 
$Y>\hat{h}(q,\boldsymbol{X}).$
\end{proposition}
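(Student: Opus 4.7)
The plan is to reduce the claim to a one-line differentiation once the right substitution is made. Write $A = \sum_{i=1}^n \mathbbm{1}_{X_i > h} X_i$ and $S = \sum_{i=1}^n X_i$, both of which are constants as functions of $Y$. On the region $Y > h$, the indicator $\mathbbm{1}_{Y>h}$ equals $1$, so
$$\widehat{\kappa}_h(\boldsymbol{X}\sqcup Y) \;=\; \frac{A+Y}{S+Y}.$$
Differentiating twice in $Y$ gives
$$\frac{\partial \widehat{\kappa}_h}{\partial Y} \;=\; \frac{S-A}{(S+Y)^2}, \qquad \frac{\partial^2 \widehat{\kappa}_h}{\partial Y^2} \;=\; \frac{-2(S-A)}{(S+Y)^3}.$$
Since every $X_i \geq x_{\min} > 0$, the subset sum $A$ is bounded above by $S$, so $S-A \geq 0$ and the second derivative is non-positive (with equality in the degenerate case where every sample point already exceeds $h$). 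This is the whole argument for the first inequality.

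For the parenthetical strengthening to $\widehat{\kappa}_q$, the task is to show that the empirical threshold $\hat h(q,\boldsymbol{X}\sqcup Y)$ is locally constant in $Y$ on the region $\{Y > \hat h(q,\boldsymbol{X})\}$, so that the same differentiation goes through with $h = \hat h(q,\boldsymbol{X})$ held fixed. The plan here is a short order-statistic bookkeeping: $\hat h(q,\boldsymbol{X})$ is the order statistic of $\boldsymbol{X}$ at a position determined by $q$ and $n$, and $\hat h(q,\boldsymbol{X}\sqcup Y)$ is the corresponding order statistic of the combined sample at the position determined by $q$ and $n+1$. If $Y$ already exceeds the threshold of $\boldsymbol{X}$, then $Y$ sits above that position in the combined sample, and whether the target rank in the enlarged sample falls at the same or at the next position below (depending on whether $\lfloor q(n+1)\rfloor$ equals $\lfloor qn\rfloor$ or $\lfloor qn\rfloor+1$), the order statistic realizing the new threshold is always one of the original $X_i$, never $Y$ itself. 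Hence $\hat h(q,\boldsymbol{X}\sqcup Y)$ does not depend on the specific value of $Y$ within that region, and the first part of the proof applies verbatim with $h$ replaced by this constant.

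The main obstacle is not analytic — the concavity inequality itself is immediate — but conceptual: one must be careful that the threshold in the $\widehat{\kappa}_q$ version is not implicitly co-moving with $Y$. Once the order-statistic argument pins down that $\hat h$ is a step function of $Y$ that is constant on $(\hat h(q,\boldsymbol{X}),\infty)$, the derivative computation in $Y$ is legitimate and the proof collapses to the three lines above. The hypothesis $n > 1/q$ enters only here, ensuring $\lfloor qn\rfloor \geq 1$ so that the empirical threshold is a bona fide order statistic rather than a boundary artefact.
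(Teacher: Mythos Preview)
Your proof is correct and matches the paper's own approach. The paper does not give a separate formal proof of this proposition; the argument is the informal discussion immediately preceding it, where the authors write (for $X_{n+1}>h$) the estimator as $1-\dfrac{S(n)-A_h(n)}{S(n)+X_{n+1}}$ and observe concavity in $X_{n+1}$, which is exactly your computation $\dfrac{\partial^2}{\partial Y^2}\dfrac{A+Y}{S+Y}=-\dfrac{2(S-A)}{(S+Y)^3}\le 0$. Your order-statistic bookkeeping for the $\widehat\kappa_q$ version is a clean way to make explicit what the paper only asserts, namely that $\hat h(q,\boldsymbol{X}\sqcup Y)$ is constant in $Y$ on $\{Y>\hat h(q,\boldsymbol{X})\}$.
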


We face a different situation from the common small sample effect resulting
from high impact from the rare observation in the tails that are less likely
to show up in small samples, a bias which goes away by repetition of sample
runs. The concavity of the estimator constitutes a upper bound for the
measurement in finite $n$, clipping large deviations, which leads to
problems of aggregation as we will state below in Theorem 1. 
\begin{figure}[h]
\includegraphics[width=\linewidth]{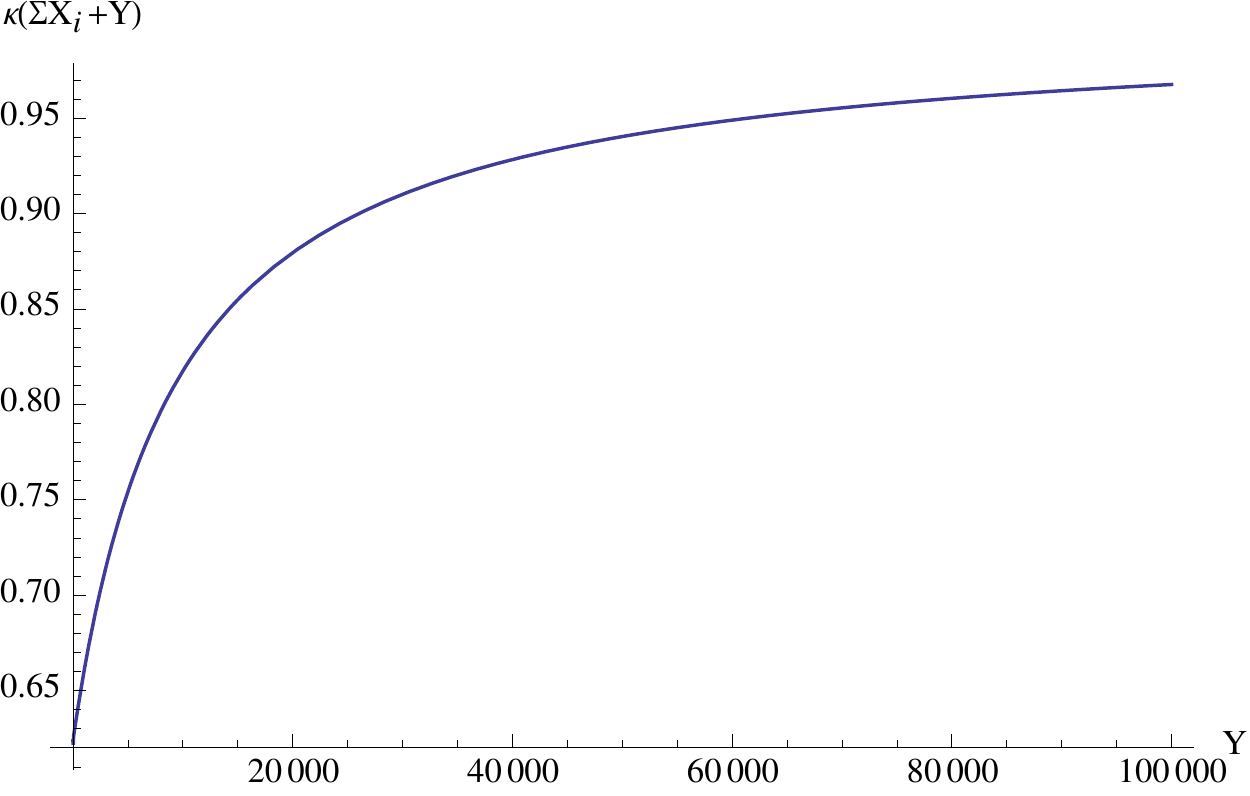}
\caption{Effect of additional observations on $\protect\kappa $}
\end{figure}
\begin{figure}[h]
\includegraphics[width=\linewidth]{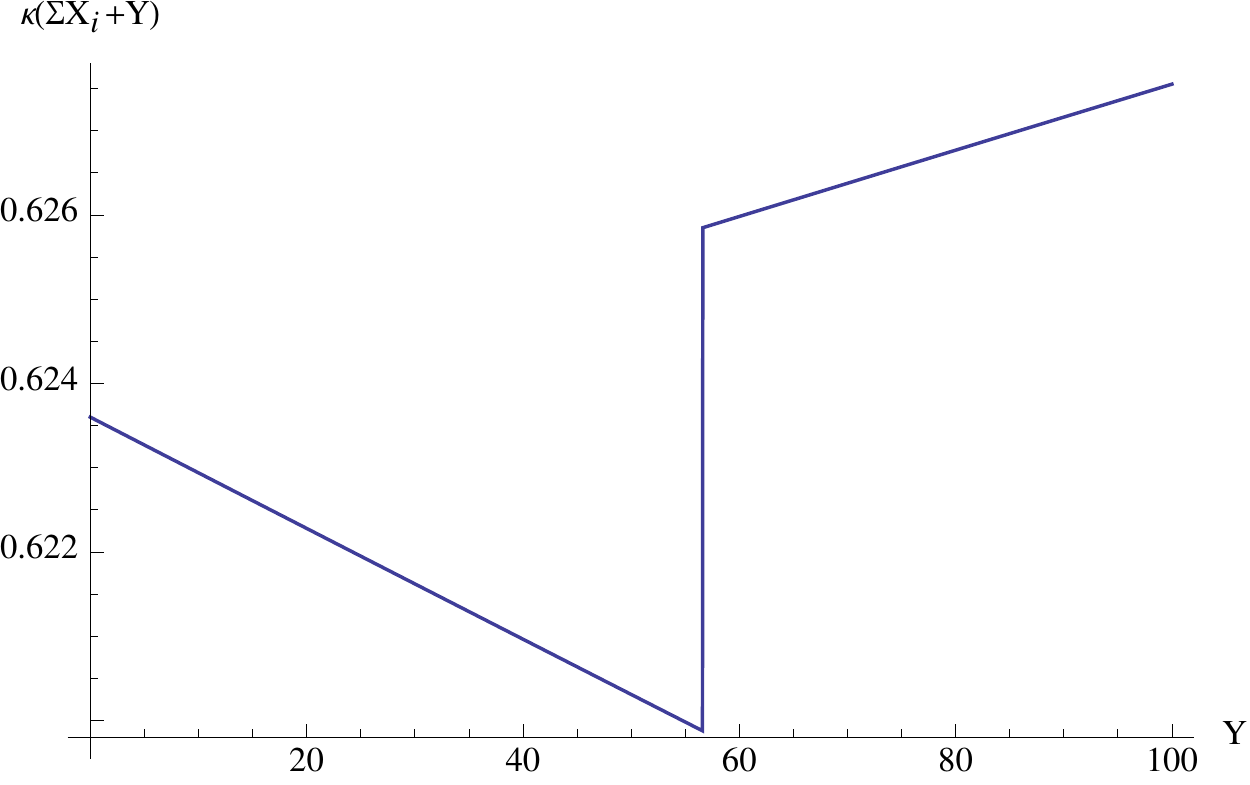}
\caption{Effect of additional observations on $\protect\kappa $, we can see
convexity on both sides of $h$ except for values of no effect to the left of 
$h$, an area of order $1/n$}
\end{figure}
In practice, even in very large sample, the contribution of very large rare
events to $\kappa _{q}$ slows down the convergence of the sample estimator
to the true value. For a better, unbiased estimate, one would need to use a
different path: first estimating the distribution parameters $\left( \hat{%
\alpha},\hat{\lambda}\right) $ and only then, estimating the theoretical
tail contribution $\kappa _{q}(\hat{\alpha},\hat{\lambda})$. Falk \cite%
{falk1995testing} observes that, even with a proper estimator of $\alpha $
and $\lambda $, the convergence is extremely slow, namely of the order of $%
n^{-\delta }/\ln n$, where the exponent $\delta $ depends on $\alpha $ and
on the tolerance of the actual distribution vs. a theoretical Pareto,
measured by the Hellinger distance. In particular, $\delta \rightarrow 0$ as 
$\alpha \rightarrow 1$, making the convergence really slow for low values of 
$\alpha $.

\section{An Inequality About Aggregating Inequality}

For the estimation of the mean of a fat-tailed r.v. $(X)_{i}^{j}$, in $m$
sub-samples of size $n_{i}$ each for a total of $n=\sum_{i=1}^{m}n_{i}$, the
allocation of the total number of observations $n$ between $i$ and $j$ does
not matter so long as the total $n$ is unchanged. Here the allocation of $n$
samples between $m$ sub-samples does matter because of the concavity of $%
\kappa $.\footnote{%
The same concavity -- and general bias -- applies when the distribution is
lognormal, and is exacerbated by high variance.} Next we prove that global
concentration as measured by $\widehat{\kappa }_{q}$ on a broad set of data
will appear higher than local concentration, so aggregating European data,
for instance, would give a $\widehat{\kappa }_{q}$ higher than the average
measure of concentration across countries -- an \emph{"inequality about
inequality"}. In other words, we claim that the estimation bias when using $%
\widehat{\kappa }_{q}(n)$ is even increased when dividing the sample into
sub-samples and taking the weighted average of the measured values $\widehat{%
\kappa }_{q}(n_{i})$.

\begin{theorem}
\label{AggrInt}Partition the $n$ data into $m$ sub-samples $N=N_{1}\cup
\ldots \cup N_{m}$ of respective sizes $n_{1},\ldots ,n_{m}$, with $%
\sum_{i=1}^{m}n_{i}=n$, and let $S_{1},\ldots ,S_{m}$ be the sum of
variables over each sub-sample, and $S=\displaystyle\sum%
\nolimits_{i=1}^{m}S_{i}$ be that over the whole sample. Then we have:%
\begin{equation*}
\mathbb{E}\left[ \widehat{\kappa }_{q}(N)\right] \geq \sum_{i=1}^{m}\mathbb{E%
}\left[ \frac{S_{i}}{S}\right] \mathbb{E}\left[ \widehat{\kappa }_{q}(N_{i})%
\right] 
\end{equation*}%
If we further assume that the distribution of variables $X_{j}$ is the same
in all the sub-samples. Then we have: 
\begin{equation*}
\mathbb{E}\left[ \widehat{\kappa }_{q}(N)\right] \geq \sum_{i=1}^{m}\frac{%
n_{i}}{n}\mathbb{E}\left[ \widehat{\kappa }_{q}(N_{i})\right] 
\end{equation*}
\end{theorem}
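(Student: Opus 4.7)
The plan is to prove Theorem~\ref{AggrInt} by first establishing a pointwise (almost-sure) inequality and then promoting it to an expectation statement.

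The first ingredient is the pointwise bound
$$S\cdot\widehat{\kappa}_{q}(N) \;\geq\; \sum_{i=1}^{m} S_{i}\cdot\widehat{\kappa}_{q}(N_{i}).$$
By the definition of $\hat{h}(q,\cdot)$, the product $S\cdot\widehat{\kappa}_{q}(N)$ equals the sum of the $k := \lfloor nq\rfloor$ largest observations of $N$, while each $S_{i}\cdot\widehat{\kappa}_{q}(N_{i})$ equals the sum of the $k_{i} := \lfloor n_{i}q\rfloor$ largest observations of $N_{i}$. Since $\sum_{i} k_{i}\leq k$ by subadditivity of the floor, and every $X_{j}\geq x_{\min}>0$, the global top-$k$ sum of $N$ dominates every sum of at most $k$ entries of $N$ --- in particular the disjoint union of the local top-$k_{i}$ sets. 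Dividing through by $S$ yields $\widehat{\kappa}_{q}(N) \geq \sum_{i}(S_{i}/S)\,\widehat{\kappa}_{q}(N_{i})$ almost surely.

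Taking expectations gives $\mathbb{E}[\widehat{\kappa}_{q}(N)] \geq \sum_{i}\mathbb{E}[(S_{i}/S)\,\widehat{\kappa}_{q}(N_{i})]$. To reach the theorem's first form I then need to replace each joint expectation $\mathbb{E}[(S_{i}/S)\,\widehat{\kappa}_{q}(N_{i})]$ by the product $\mathbb{E}[S_{i}/S]\cdot\mathbb{E}[\widehat{\kappa}_{q}(N_{i})]$. Assuming independence across sub-samples, conditioning on $\mathcal{F}_{i} = \sigma(X_{j}: j\in N_{i})$ gives $\mathbb{E}[S_{i}/S\mid \mathcal{F}_{i}] = S_{i}\,g(S_{i})$ with $g(s) = \mathbb{E}[1/(s+S_{-i})]$, where $S_{-i} = S - S_{i}$; since $s\mapsto s\,g(s)$ is increasing, the required decoupling reduces to the covariance inequality $\mathrm{Cov}\bigl(S_{i}\,g(S_{i}),\,\widehat{\kappa}_{q}(N_{i})\bigr)\geq 0$. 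For the second (identically-distributed) form, the first form specialises via the symmetry $\mathbb{E}[S_{i}/S] = n_{i}/n$, which is immediate from exchangeability of the $n$ observations in $N$: each $\mathbb{E}[X_{j}/S] = 1/n$.

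The combinatorial pathwise step is essentially a counting argument and is routine. The principal obstacle is establishing the positive-correlation claim, since $\widehat{\kappa}_{q}(N_{i})$ is not a monotone function of the coordinates of $N_{i}$ (increasing a value already above the local threshold raises the ratio, while increasing a value below it lowers it), so a direct FKG inequality does not apply. I would attempt to force the correlation through by conditioning on the bottom order statistics of $N_{i}$, on which $\widehat{\kappa}_{q}(N_{i})$ becomes an increasing function of the top order statistics, and then controlling the residual conditional covariance --- this is the step where the fat-tailed hypothesis is expected to do the real work of making the bound hold, in line with the concavity mechanism already recorded in Proposition~1.
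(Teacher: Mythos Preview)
Your pointwise inequality $S\,\widehat{\kappa}_q(N)\geq\sum_i S_i\,\widehat{\kappa}_q(N_i)$ is exactly the paper's first step: the paper reduces by induction to two sub-samples, observes that $A=\max_{|J|=qm}\sum_{i\in J}X_i$ (and similarly for $A'$, $A''$), and notes that the union of the two local maximizing index sets is a feasible candidate for the global one, giving $A+A'\le A''$ pathwise. Your exchangeability computation $\mathbb{E}[S_i/S]=n_i/n$ in the identically-distributed case is likewise the paper's.

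The gap is in the decoupling step, and your diagnosis is off in one important respect. You correctly isolate the requirement $\mathrm{Cov}\bigl(S_i\,g(S_i),\,\widehat{\kappa}_q(N_i)\bigr)\ge 0$, but then propose to push it through by invoking the fat-tail hypothesis and the concavity of Proposition~1. The paper uses \emph{no} tail assumption here; the inequality is distribution-free for positive i.i.d.\ variables. The paper's device is to condition on the cut-off order statistic $T=X_{[n_iq]}$ itself (not on the block of bottom order statistics). By the Markov property of order statistics, conditionally on $T$ the top sum $A_i=\sum_j X_j\mathbbm{1}_{X_j\ge T}$ and the bottom sum $B_i=S_i-A_i$ are independent. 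Conditioning further on $B_i$ and on $S_{-i}$ (which is independent of the $i$-th sub-sample), both $\dfrac{S_i}{S}=\dfrac{A_i+B_i}{A_i+B_i+S_{-i}}$ and $\widehat{\kappa}_q(N_i)=\dfrac{A_i}{A_i+B_i}$ become increasing functions of the single real variable $A_i$, and the one-variable Chebyshev correlation inequality gives $\mathbb{E}\!\left[\frac{S_i}{S}\,\widehat{\kappa}_q(N_i)\,\middle|\,T,B_i,S_{-i}\right]\ge \mathbb{E}\!\left[\frac{S_i}{S}\,\middle|\,T,B_i,S_{-i}\right]\mathbb{E}\!\left[\widehat{\kappa}_q(N_i)\,\middle|\,T,B_i,S_{-i}\right]$. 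There is no ``residual conditional covariance'' to control and no recourse to tail shape; your plan to lean on the power-law hypothesis would at best be unnecessary and at worst would leave the general statement unproved.
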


In other words, averaging concentration measures of subsamples, weighted by
the total sum of each subsample, produces a downward biased estimate of the
concentration measure of the full sample.

\begin{proof}
An elementary induction reduces the question to the case of two sub-samples.
Let $q\in (0,1)$ and $\left( X_{1},\ldots ,X_{m}\right) $ and $\left(
X_{1}^{\prime },\ldots ,X_{n}^{\prime }\right) $ be two samples of positive
i.i.d. random variables, the $X_{i}$'s having distributions $p(dx)$ and the $%
X_{j}^{\prime }$'s having distribution $p^{\prime }(dx^{\prime }).$ For
simplicity, we assume that both $qm$ and $qn$ are integers. We set $S=%
\displaystyle\sum\limits_{i=1}^{m}X_{i}$ and $S^{\prime }=\displaystyle%
\sum\limits_{i=1}^{n}X_{i}^{\prime }.$ We define $A=\displaystyle%
\sum\limits_{i=1}^{mq}X_{[i]}$ where $X_{[i]}$ is the $i$-th largest value
of $\left( X_{1},\ldots ,X_{m}\right) $, and $A^{\prime }=\displaystyle%
\sum\limits_{i=1}^{mq}X_{[i]}^{\prime }$ where $X_{[i]}^{\prime }$ is the $i$%
-th largest value of $\left( X_{1}^{\prime },\ldots ,X_{n}^{\prime }\right) .
$ We also set $S^{\prime \prime }=S+S^{\prime }$ and $A"=\displaystyle%
\sum\limits_{i=1}^{(m+n)q}X_{[i]}^{\prime \prime }$ where $X_{[i]}^{\prime
\prime }$ is the $i$-th largest value of the joint sample $(X_{1},\ldots
,X_{m},X_{1}^{\prime },\ldots ,X_{n}^{\prime }).$

The $q$-concentration measure for the samples $\boldsymbol{X}%
=(X_{1},...,X_{m}),$ $\boldsymbol{X}^{\prime }=(X_{1}^{\prime
},...,X_{n}^{\prime })$ and $\boldsymbol{X}^{\prime \prime }=(X_{1},\ldots
,X_{m},X_{1}^{\prime },\ldots ,X_{n}^{\prime })$ are:%
\begin{equation*}
\kappa =\frac{A}{S}\qquad \kappa ^{\prime }=\frac{A^{\prime }}{S^{\prime }}%
\qquad \kappa ^{\prime \prime }=\frac{A^{\prime \prime }}{S^{\prime \prime }}
\end{equation*}%
We must prove that he following inequality holds for expected concentration
measures:%
\begin{equation*}
\mathbb{E}\left[ \kappa ^{\prime \prime }\right] \geq \mathbb{E}\left[ \frac{%
S}{S^{\prime \prime }}\right] \mathbb{E}\left[ \kappa \right] +\mathbb{E}%
\left[ \frac{S^{\prime }}{S^{\prime \prime }}\right] \mathbb{E}\left[ \kappa
^{\prime }\right] 
\end{equation*}%
We observe that:%
\begin{equation*}
A=\max_{\substack{ J\subset \left\{ 1,...,m\right\}  \\ \left\vert
J\right\vert =\theta m}}\sum_{i\in J}X_{i}
\end{equation*}%
and, similarly $A^{\prime }=\max_{J^{\prime }\subset \left\{ 1,...,n\right\}
,\left\vert J^{\prime }\right\vert =qn}\sum_{i\in J^{\prime }}X_{i}^{\prime }
$ and $A^{\prime \prime }=\max_{J^{\prime \prime }\subset \left\{
1,...,m+n\right\} ,\left\vert J^{\prime \prime }\right\vert
=q(m+n)}\sum_{i\in J^{\prime \prime }}X_{i},$ where we have denoted $%
X_{m+i}=X_{i}^{\prime }$ for $i=1\ldots n.$ If $J\subset \left\{
1,...,m\right\} ,\left\vert J\right\vert =\theta m$ and $J^{\prime }\subset
\left\{ m+1,...,m+n\right\} ,\left\vert J^{\prime }\right\vert =qn$, then $%
J^{\prime \prime }=J\cup J^{\prime }$ has cardinal $m+n$, hence $A+A^{\prime
}=\sum_{i\in J^{\prime \prime }}X_{i}\leq A^{\prime \prime },$ whatever the
particular sample. Therefore $\kappa ^{\prime \prime }\geq \frac{S}{%
S^{\prime \prime }}\kappa +\frac{S^{\prime }}{S^{\prime \prime }}\kappa
^{\prime }$ and we have: 
\begin{equation*}
\mathbb{E}\left[ \kappa ^{\prime \prime }\right] \geq \mathbb{E}\left[ \frac{%
S}{S^{\prime \prime }}\kappa \right] +\mathbb{E}\left[ \frac{S^{\prime }}{%
S^{\prime \prime }}\kappa ^{\prime }\right] 
\end{equation*}

Let us now show that:\quad 
\begin{equation*}
\mathbb{E}\left[ \frac{S}{S^{\prime \prime }}\kappa \right] =\mathbb{E}\left[
\frac{A}{S^{\prime \prime }}\right] \geq \mathbb{E}\left[ \frac{S}{S^{\prime
\prime }}\right] \mathbb{E}\left[ \frac{A}{S}\right] 
\end{equation*}%
If this is the case, then we identically get for $\kappa ^{\prime }:$%
\begin{equation*}
\mathbb{E}\left[ \frac{S^{\prime }}{S^{\prime \prime }}\kappa ^{\prime }%
\right] =\mathbb{E}\left[ \frac{A^{\prime }}{S^{\prime \prime }}\right] \geq 
\mathbb{E}\left[ \frac{S^{\prime }}{S^{\prime \prime }}\right] \mathbb{E}%
\left[ \frac{A^{\prime }}{S^{\prime }}\right] 
\end{equation*}%
hence we will have:%
\begin{equation*}
\mathbb{E}\left[ \kappa ^{\prime \prime }\right] \geq \mathbb{E}\left[ \frac{%
S}{S^{\prime \prime }}\right] \mathbb{E}\left[ \kappa \right] +\mathbb{E}%
\left[ \frac{S^{\prime }}{S^{\prime \prime }}\right] \mathbb{E}\left[ \kappa
^{\prime }\right] 
\end{equation*}

Let $T=X_{[mq]}$ be the cut-off point (where $\left[ mq\right] $ is the
integer part of $mq$), so that $A=$ $\displaystyle\sum\limits_{i=1}^{m}X_{i}%
\mathbbm{1}_{X_{i}\geq T}$ and let $B=S-A=\displaystyle\sum%
\limits_{i=1}^{m}X_{i}\mathbbm{1}_{X_{i}<T}.$ Conditionally to $T$, $A$ and $%
B$ are independent: $A$ is a sum if $m\theta $ samples constarined to being
above $T$, while $B$ is the sum of $m(1-\theta )$ independent samples
constrained to being below $T$. They are also independent of $S^{\prime }.$
Let $p_{A}(t,da)$ and $p_{B}(t,db)$ be the distribution of $A$ and $B$
respectively, given $T=t$. We recall that $p^{\prime }(ds^{\prime })$ is the
distribution of $S^{\prime }$ and denote $q(dt)$ that of $T$. We have:%
\begin{multline*}
\mathbb{E}\left[ \frac{S}{S^{\prime \prime }}\kappa \right] = \\
\iint \frac{a+b}{a+b+s^{\prime }}\frac{a}{a+b}p_{A}(t,da)\,p_{B}(t,db)%
\,q(dt)\,p^{\prime }(ds^{\prime })
\end{multline*}

For given $b$, $t$ and $s^{\prime }$, $a\rightarrow \frac{a+b}{a+b+s^{\prime
}}$ and $a\rightarrow \frac{a}{a+b}$ are two increasing functions of the
same variable $a$, hence conditionally to $T$, $B$ and $S^{\prime }$, we
have:%
\begin{multline*}
\mathbb{E}\left[ \left. \frac{S}{S^{\prime \prime }}\kappa \right\vert
T,B,S^{\prime }\right] =\mathbb{E}\left[ \left. \frac{A}{A+B+S^{\prime }}%
\right\vert T,B,S^{\prime }\right]  \\
\geq \mathbb{E}\left[ \left. \frac{A+B}{A+B+S^{\prime }}\right\vert
T,B,S^{\prime }\right] \mathbb{E}\left[ \left. \frac{A}{A+B}\right\vert
T,B,S^{\prime }\right] 
\end{multline*}%
This inequality being valid for any values of $T$, $B$ and $S^{\prime }$, it
is valid for the unconditional expectation, and we have:%
\begin{equation*}
\mathbb{E}\left[ \frac{S}{S^{\prime \prime }}\kappa \right] \geq \mathbb{E}%
\left[ \frac{S}{S^{\prime \prime }}\right] \mathbb{E}\left[ \frac{A}{S}%
\right] 
\end{equation*}

If the two samples have the same distribution, then we have:%
\begin{equation*}
\mathbb{E}\left[ \kappa ^{\prime \prime }\right] \geq \frac{m}{m+n}\mathbb{E}%
\left[ \kappa \right] +\frac{n}{m+n}\mathbb{E}\left[ \kappa ^{\prime }\right]
\end{equation*}%
Indeed, in this case, we observe that $\mathbb{E}\left[ \frac{S}{S^{\prime
\prime }}\right] =\frac{m}{m+n}.$ Indeed $S=\sum_{i=1}^{m}X_{i}$ and the $%
X_{i}$ are identically distributed, hence $\mathbb{E}\left[ \frac{S}{%
S^{\prime \prime }}\right] =m\mathbb{E}\left[ \frac{X}{S^{\prime \prime }}%
\right] .$ But we also have $\mathbb{E}\left[ \frac{S^{\prime \prime }}{%
S^{\prime \prime }}\right] =1=(m+n)\mathbb{E}\left[ \frac{X}{S^{\prime
\prime }}\right] $ therefore $\mathbb{E}\left[ \frac{X}{S^{\prime \prime }}%
\right] =\frac{1}{m+n}$. Similarly, $\mathbb{E}\left[ \frac{S^{\prime }}{%
S^{\prime \prime }}\right] =\frac{n}{m+n},$ yielding the result.

This ends the proof of the theorem.
\end{proof}

Let $X$ be a positive random variable and $h\in (0,1).$ We remind the
theoretical $h$-concentration measure, defined as:%
\begin{equation*}
\kappa _{h}=\frac{P(X>h)\mathbb{E}\left[ X\left\vert X>h\right. \right] }{%
\mathbb{E}\left[ X\right] }
\end{equation*}%
whereas the $n$-sample $\theta $-concentration measure is $\widehat{\kappa }%
_{h}(n)=\frac{A(n)}{S(n)},$ where $A(n)$ and $S(n)$ are defined as above for
an $n$-sample $\boldsymbol{X}=\left( X_{1},\ldots ,X_{n}\right) $ of i.i.d.
variables with the same distribution as $X$.

\begin{theorem}
For any $n\in \mathbb{N},$ we have:%
\begin{equation*}
\mathbb{E}\left[ \widehat{\kappa }_{h}(n)\right] <\kappa _{h}
\end{equation*}%
and%
\begin{equation*}
\lim_{n\rightarrow +\infty }\widehat{\kappa }_{h}(n)=\kappa _{h}\quad \text{%
a.s. and in probability}
\end{equation*}
\end{theorem}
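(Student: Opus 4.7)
My plan is to handle the two claims separately: the almost-sure convergence is essentially routine, and the strict finite-sample bound will follow by combining the aggregation inequality (Theorem~\ref{AggrInt}) with Fekete's lemma, upgraded to a strict inequality by auditing the FKG step inside the proof of Theorem~\ref{AggrInt}.

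For the convergence, I would apply the strong law of large numbers to the iid sequences $(X_i)$ and $(X_i\mathbbm{1}_{X_i>h})$, both integrable under $\mathbb{E}[X]<\infty$. This yields $S(n)/n\to\mathbb{E}[X]>0$ and $A(n)/n\to\mathbb{E}[X\mathbbm{1}_{X>h}]$ almost surely; the continuous mapping theorem then delivers $\widehat{\kappa}_h(n)\to\kappa_h$ almost surely and hence in probability.

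For the bound, set $L(n)=\mathbb{E}[\widehat{\kappa}_h(n)]$. Applying Theorem~\ref{AggrInt} in the iid case to two subsamples of sizes $m$ and $k$ gives $(m+k)L(m+k)\geq mL(m)+kL(k)$, so the sequence $n\mapsto nL(n)$ is super-additive. Fekete's lemma then yields $L(n)\leq \sup_kL(k)=\lim_kL(k)$ for every $n$, and since $\widehat{\kappa}_h\in[0,1]$, bounded convergence together with the a.s.\ limit above identifies $\lim_kL(k)=\kappa_h$. This already delivers the weak bound $L(n)\leq \kappa_h$.

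To upgrade to strict inequality, I would revisit the FKG/Chebyshev correlation step driving the proof of Theorem~\ref{AggrInt}. That step is a conditional correlation inequality between two strictly increasing functions of the tail-sum $A$, and is strict whenever $A$ has positive conditional variance, i.e.\ once at least one sample lands above $h$ and the conditional distribution of $X$ above $h$ is non-degenerate. Under the implicit hypothesis $0<\mathbb{P}(X>h)<1$, these conditions hold with positive probability for every $n\geq 1$, producing strict super-additivity $L(2n)>L(n)$. Chained with $L(2n)\leq\kappa_h$ from the Fekete step, this forces $L(n)<L(2n)\leq\kappa_h$. The main obstacle is precisely this strictness audit: Theorem~\ref{AggrInt} was proved only as a weak inequality, so one must track strictness carefully through the conditioning on $(T,B,S')$ and the outer integration, and verify that the required non-degeneracy of the conditional tail law is not a vacuous assumption at small $n$.
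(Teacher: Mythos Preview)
Your proposal is correct and follows essentially the same route as the paper: super-additivity of $n\mapsto n\,\mathbb{E}[\widehat{\kappa}_h(n)]$ inherited from Theorem~\ref{AggrInt}, a Fekete-type argument to bound $\mathbb{E}[\widehat{\kappa}_h(n)]$ by its limit, the strong law applied separately to $S(n)/n$ and $A(n)/n$, and dominated convergence (via the bound $\widehat{\kappa}_h\in[0,1]$) to identify that limit as $\kappa_h$. The paper's own proof is in fact terser than yours and does not explicitly justify the \emph{strict} inequality; your plan to audit the Chebyshev/FKG correlation step in Theorem~\ref{AggrInt} for strictness---noting it is strict whenever the conditional law of the tail sum $A$ is non-degenerate, which holds with positive probability once $0<\mathbb{P}(X>h)<1$ and $X\mid X>h$ is non-constant---is the right way to close that gap.
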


\begin{proof}
The above corrolary shows that the sequence $n\mathbb{E}\left[ \widehat{%
\kappa }_{h}(n)\right] $ is super-additive, hence $\mathbb{E}\left[ \widehat{%
\kappa }_{h}(n)\right] $ is an increasing sequence. Moreover, thanks to the
law of large numbers, $\frac{1}{n}S(n)$ converges almost surely and in
probability to $\mathbb{E}\left[ X\right] $ and $\frac{1}{n}A(n)$ converges
almost surely and in probability to $\mathbb{E}\left[ X\mathbbm{1}_{X>h}%
\right] =P(X>h)\mathbb{E}\left[ X\left\vert X>h\right. \right] $, hence
their ratio also converges almost surely to $\kappa _{h}$. On the other
hand, this ratio is bounded by 1. Lebesgue dominated convergence theorem
concludes the argument about the convergence in probability.
\end{proof}

\section{Mixed Distributions For The Tail Exponent}

Consider now a random variable $X$, the distribution of which $p(dx)$ is a
mixture of parametric distributions with different values of the parameter: $%
p(dx)=\sum_{i=1}^{m}\omega _{i}p_{\alpha _{i}}(dx).$ A typical $n$-sample of 
$X$ can be made of $n_{i}=\omega _{i}n$ samples of $X_{\alpha _{i}}$ with
distribution $p_{\alpha _{i}}.$ The above theorem shows that, in this case,
we have:%
\begin{equation*}
\mathbb{E}\left[ \widehat{\kappa }_{q}(n,X)\right] \geq \displaystyle%
\sum\limits_{i=1}^{m}\mathbb{E}\left[ \frac{S(\omega _{i}n,X_{\alpha _{i}})}{%
S(n,X)}\right] \mathbb{E}\left[ \widehat{\kappa }_{q}(\omega _{i}n,X_{\alpha
_{i}})\right]
\end{equation*}%
When $n\rightarrow +\infty ,$ each ratio $\displaystyle\frac{S(\omega
_{i}n,X_{\alpha _{i}})}{S(n,X)}$ converges almost surely to $\omega _{i}$
respectively, therefore we have the following convexity inequality: 
\begin{equation*}
\kappa _{q}(X)\geq \displaystyle\sum\limits_{i=1}^{m}\omega _{i}\kappa
_{q}(X_{\alpha _{i}})
\end{equation*}

The case of Pareto distribution is particularly interesting. Here, the
parameter $\alpha $ represents the tail exponent of the distribution. If we
normalize expectations to $1$, the cdf of $X_{\alpha }$ is $F_{\alpha
}(x)=1-\left( \frac{x}{x_{\min }}\right) ^{-\alpha }$ and we have:%
\begin{equation*}
\kappa _{q}(X_{\alpha })=q^{\frac{\alpha -1}{\alpha }}
\end{equation*}%
and%
\begin{equation*}
\frac{d^{2}}{d\alpha ^{2}}\kappa _{q}(X_{\alpha })=q^{\frac{\alpha -1}{%
\alpha }}\frac{(\log q)^{2}}{\alpha ^{3}}>0
\end{equation*}%
Hence $\kappa _{q}(X_{\alpha })$ is a convex function of $\alpha $ and we
can write:%
\begin{equation*}
\kappa _{q}(X)\geq \displaystyle\sum\limits_{i=1}^{m}\omega _{i}\kappa
_{q}(X_{\alpha _{i}})\geq \kappa _{q}(X_{\bar{\alpha}})
\end{equation*}%
where $\bar{\alpha}=\sum_{i=1}^{m}\omega _{i}\alpha $.

Suppose now that $X$ is a positive random variable with unknown
distribution, except that its tail decays like a power low with unknown
exponent. An unbiased estimation of the exponent, with necessarily some
amount of uncertainty (i.e., a distribution of possible true values around
some average), would lead to a downward biased estimate of $\kappa _{q}.$

\bigskip Because the concentration measure only depends on the tail of the
distribution, this inequality also applies in the case of a mixture of
distributions with a power decay, as in Equation \ref{powerlaweq}:

\begin{equation}
\mathbb{P}(X>x)\sim \sum_{j=1}^{N}\omega _{i}L_{i}(x)x^{-\alpha {_{j}}}
\label{mixedd}
\end{equation}

The slightest uncertainty about the exponent increases the concentration
index. One can get an actual estimate of this bias by considering an average 
$\bar{\alpha}>1$ and two surrounding values $\alpha ^{+}=\alpha +\delta $
and $\alpha ^{-}=\alpha -\delta .$ The convexity inequaly writes as follows:%
\begin{equation*}
\kappa _{q}(\bar{\alpha})=q^{1-\frac{1}{\bar{\alpha}}}<\frac{1}{2}\left(
q^{1-\frac{1}{\alpha +\delta }}+q^{1-\frac{1}{\alpha -\delta }}\right) 
\end{equation*}

So in practice, an estimated $\bar{\alpha}$ of around $3/2$, sometimes
called the "half-cubic" exponent, would produce similar results as value of $%
\alpha $ much closer ro 1, as we used in the previous section. Simply $%
\kappa _{q}(\alpha )$ is convex, and dominated by the second order effect $%
\frac{\ln (q)q^{1-\frac{1}{\alpha +\delta }}(\ln (q)-2(\alpha +\delta ))}{%
(\alpha +\delta )^{4}}$, an effect that is exacerbated at lower values of $%
\alpha $.

To show how unreliable the measures of inequality concentration from
quantiles, consider that a standard error of 0.3 in the measurement of $%
\alpha $ causes $\kappa _{q}(\alpha )$ to rise by 0.25.

\section{A Larger Total Sum is Accompanied by Increases in $\widehat{\protect%
\kappa }_{q}$}

There is a large dependence between the estimator $\widehat{\kappa }_{q}$
and the sum $S=\displaystyle\sum\limits_{j=1}^{n}X_{j}:$ conditional on an
increase in $\widehat{\kappa }_{q}$ the expected sum is larger. Indeed, as
shown in theorem \ref{AggrInt}, $\widehat{\kappa }_{q}$ and $S$ are
positively correlated.

For the case in which the random variables under concern are wealth, we
observe as in Figure \ref{condineq} such conditional increase; in other
words, since the distribution is of the class of fat tails under
consideration, the maximum is of the same order as the sum, additional
wealth means more measured inequality. Under such dynamics, is quite absurd
to assume that additional wealth will arise from the bottom or even the
middle. (The same argument can be applied to wars, epidemics, size or companies, etc.)

\begin{figure}[h]
\includegraphics[width=\linewidth]{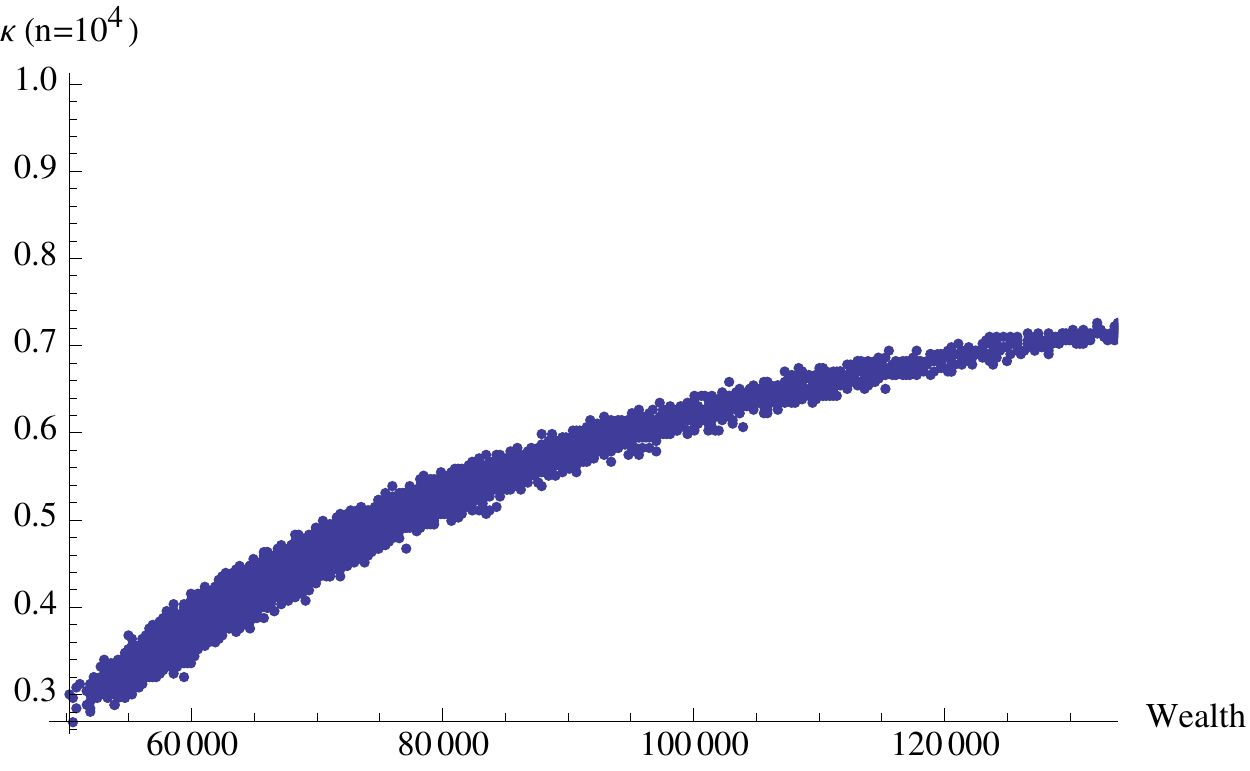}
\caption{Effect of additional wealth on $\hat{\protect\kappa}$}
\label{condineq}
\end{figure}

\section{Conclusion and Proper Estimation of Concentration}

Concentration can be high at the level of the generator, but in small units
or subsections we will observe a lower $\kappa _{q}$. So examining times
series, we can easily get a historical illusion of rise in, say, wealth
concentration when it has been there all along at the level of the process;
and an expansion in the size of the unit measured can be part of the
explanation.\footnote{%
Accumulated wealth is typically thicker tailed than income, see \cite%
{gabaix2008power}.}

Even the estimation of $\alpha $ can be biased in some domains where one
does not see the entire picture: in the presence of uncertainty about the
"true" $\alpha $, it can be shown that, unlike other parameters, the one to
use is not the probability-weighted exponents (the standard average) but
rather the minimum across a section of exponents \cite{taleb2014silent}.

One must not perform analyses of year-on-year changes in $\widehat{\kappa }%
_{q}$ without adjustment. It did not escape our attention that some theories
are built based on claims of such "increase" in inequality, as in \cite%
{piketty2014capital}, without taking into account the true nature of $\kappa
_{q}$, and promulgating theories about the "variation" of inequality without
reference to the stochasticity of the estimation $-$ and the lack of
consistency of $\kappa _{q}$ across time and sub-units. What is worse,
rejection of such theories also ignored the size effect, by countering with
data of a different sample size, effectively making the dialogue on
inequality uninformational statistically.\footnote{\textit{Financial Times},
May 23, 2014 "Piketty findings undercut by errors" by Chris Giles.}

The mistake appears to be commonly made in common inference about fat-tailed
data in the literature. The very methodology of using concentration and
changes in concentration is highly questionable. For instance, in the thesis
by Steven Pinker \cite{pinker2011better} that the world is becoming less
violent, we note a fallacious inference about the concentration of damage
from wars from a $\widehat{\kappa }_{q}$ with minutely small population in
relation to the fat-tailedness.\footnote{%
Using Richardson's data, \cite{pinker2011better}: "(Wars) followed an 80:2
rule: almost eighty percent of the deaths were caused by \textit{two percent}
(his emph.) of the wars". So it appears that both Pinker and the literature
cited for the quantitative properties of violent conflicts are using a
flawed methodology, one that produces a severe bias, as the centile
estimation has extremely large biases with fat-tailed wars. Furthermore claims about the mean become spurious at low exponents.} Owing to the
fat-tailedness of war casualties and consequences of violent conflicts, an
adjustment would rapidly invalidate such claims that violence from war has
statistically experienced a decline.

\subsection{Robust methods and use of exhaustive data}

We often face argument of the type "the method of measuring concentration
from quantile contributions $\hat{\kappa}$ is robust and based on a complete
set of data". Robust methods, alas, tend to fail with fat-tailed data, see 
\cite{taleb2014silent}. But, in addition, the problem here is worse: even if
such "robust" methods were deemed unbiased, a method of direct centile
estimation is still linked to a static and specific population and does not
aggregage. Accordingly, such techniques do not allow us to make statistical
claims or scientific statements about the true properties which should
necessarily carry out of sample.

Take an insurance (or, better, reinsurance) company. The "accounting"
profits in a year in which there were few claims do not reflect on the
"economic" status of the company and it is futile to make statements on the
concentration of losses per insured event based on a single year sample. The
"accounting" profits are not used to predict variations year-on-year, rather
the exposure to tail (and other) events, analyses that take into account the
stochastic nature of the performance. This difference between "accounting"
(deterministic) and "economic" (stochastic) values matters for policy
making, particularly under fat tails. The same with wars: we do not estimate
the severity of a (future) risk based on past in-sample historical data.

\subsection{How Should We Measure Concentration?}

Practitioners of risk managers now tend to compute CVaR and other metrics, methods that are
extrapolative and nonconcave, such as the information from the $\alpha $
exponent, taking the one closer to the lower bound of the range of
exponents, as we saw in our extension to Theorem 2 and rederiving the corresponding $\kappa$%
, or, more rigorously, integrating the functions of $\alpha$ across the various possible
states. Such methods of adjustment are less biased and do not get mixed up
with problems of aggregation --they are similar to the "stochastic
volatility" methods in mathematical finance that consist in adjustments to
option prices by adding a "smile" to the standard deviation, in proportion to the variability of the
parameter representing volatility and the errors in its measurement. Here it
would be "stochastic alpha" or "stochastic tail exponent"\footnote{%
Also note that, in addition to the centile estimation problem, some authors
such as \cite{piketty2006evolution} when dealing with censored data, use
Pareto interpolation for unsufficient information about the tails (based on
tail parameter), filling-in the bracket with conditional average bracket
contribution, which is not the same thing as using full power-law extension;
such a method retains a significant bias.} By extrapolative, we mean the
built-in extension of the tail in the measurement by taking into account
realizations outside the sample path that are in excess of the extrema
observed.\footnote{%
Even using a lognormal distribution, by fitting the scale parameter, works
to some extent as a rise of the standard deviation extrapolates probability
mass into the right tail.} \footnote{We also note that the theorems would also apply to Poisson jumps, but we focus on the powerlaw case in the application, as the methods for fitting Poisson jumps are interpolative and have proved to be easier to fit in-sample than out of sample, see \cite{taleb2014silent}.}

\section*{Acknowledgment}

The late Beno\^{\i}t Mandelbrot, Branko Milanovic, Dominique Gu\'{e}guan,
Felix Salmon, Bruno Dupire, the late Marc Yor, Albert Shiryaev, an anonymous
referee, the staff at Luciano Restaurant in Brooklyn and Naya in Manhattan.

\bibliographystyle{IEEEtran}
\bibliography{/Users/nntaleb/Dropbox/Central-bibliography}

\end{document}